\documentclass{scrartcl}

\usepackage[format=plain]{caption}
\usepackage{natbib}
\usepackage[dvips]{graphicx}
\usepackage{hyperref}
\usepackage[noblocks]{authblk}
\usepackage[utf8]{inputenc} 
\usepackage{microtype}

\usepackage{amsthm, amsmath, amsfonts, mathtools, amssymb} 
\usepackage{sgame} 

\usepackage{multirow,array}
\usepackage{tikz}
\usetikzlibrary{trees} 
\usetikzlibrary{calc}
\usetikzlibrary{matrix}
\usetikzlibrary{positioning}
\usetikzlibrary{3d}
\usetikzlibrary{fadings,decorations.pathreplacing}
\usetikzlibrary{arrows}
\usepackage[mathscr]{euscript}
\usepackage{enumitem}
\usepackage{bm}
\usepackage{color}
\usepackage{makecell}
\usepackage{multicol}
\usepackage{float}
\restylefloat{table}

\usepackage[capitalize]{cleveref}
\usepackage{times}

\def\LA{\rlap{\scalebox{1.2}[1.5]{\kern2pt\raisebox{-.5pt}{$\leftarrow$}}}}
\def\RA{\rlap{\scalebox{1.2}[1.5]{\kern2pt\raisebox{-.5pt}{$\rightarrow$}}}}



\newcommand*{\pup}{p_u^+}
\newcommand*{\pum}{p_u^-}
\newcommand*{\prp}{p_r^+}
\newcommand*{\prm}{p_r^-}
\newcommand*{\parp}{p_{ar}^+}
\newcommand*{\parm}{p_{ar}^-}

\newcommand*{\eup}{e_u^+}
\newcommand*{\eum}{e_u^-}
\newcommand*{\Grp}{G_r^+}
\newcommand*{\Grm}{G_r^-}
\newcommand*{\Garp}{G_{ar}^+}
\newcommand*{\Garm}{G_{ar}^-}

\def\SetR{\mathbb{R}}           

\tikzset{
solid node/.style={circle,draw,inner sep=1.5,fill=black},
hollow node/.style={circle,draw,inner sep=1.5}
}


\newtheorem{theorem}{Theorem}

\newtheorem{proposition}[theorem]{Proposition}
\newtheorem{corollary}[theorem]{Corollary}

\crefname{claim}{Claim}{Claims}

\crefname{statement}{Statement}{Statements}
\theoremstyle{definition}
\newtheorem{definition}[theorem]{Definition}

\crefname{assumption}{Assumption}{Assumptions}

\crefname{customassumption}{Assumption}{Assumptions}
\theoremstyle{remark}

\theoremstyle{definition}

\newtheorem{primary statistics}[theorem]{Primary Statistics}
\newtheorem{auxiliary statistics}[theorem]{Auxiliary Statistics}

\newenvironment{keywords}%
{\begin{abstract}\noindent}%
{\end{abstract}}

\begin{document}

\author{Tobias W{\"a}ngberg}
\author{Mikael B{\"o}{\"o}rs}
\affil{Link{\"o}ping University, 581 83 Link{\"o}ping, Sweden}
\author{Elliot Catt}
\author{Tom Everitt}
\author{Marcus Hutter}
\affil{Australian National University, Acton 2601, Australia}

\title{A Game-Theoretic Analysis of The Off-Switch Game}
\date{June 12, 2017}

\maketitle

\begin{abstract}
  \noindent
  \emph{Abstract:}
	The off-switch game is a game theoretic model of a highly
  intelligent robot interacting with a human.
  In the original paper by \citet{Hadfield-Menell2016},
  the analysis is not fully game-theoretic as the human is modelled
  as an irrational player, and the robot's best action is only
  calculated under unrealistic normality and soft-max assumptions.
  In this paper, we make the analysis fully game theoretic,
  by modelling the human as a rational player with a random utility
  function.
  As a consequence, we are able to easily calculate
  the robot's best action for arbitrary belief and irrationality
  assumptions.
\end{abstract}

\begin{keywords}
  \emph{Keywords:} AI safety, corrigibility, intelligent agents, game theory,
  uncertainty
\end{keywords}

\tableofcontents

\pagebreak
\section{Introduction}

Artificially intelligent systems are often created to satisfy some
goal. For example, \emph{Win a chess game} or \emph{Keep the house clean}.
Almost any goal can be formulated in terms of a reward or utility
function $U$ that maps states and actions to real numbers \citep{vonNeumann1947}.
This utility function may either be preprogrammed by the designers,
or learnt \citep{Dewey2011}.

A core problem in Artificial General Intelligence (AGI) safety is to ensure that the utility function
$U$ is \emph{aligned} with human interests \citep{Wiener1960,Soares2014}.
Agents with goals that conflict with human interests
may make very bad or adversarial decisions.
Further, such agents may even resist the human designers altering
their utility functions \citep{Soares2015cor,Omohundro2008}
or shutting them down \citep{Hadfield-Menell2016}.
These problems are tightly related.
An agent that permits shut down can be altered while it
is turned off.
Conversely, an agent that is altered to have no preferences
will not resist being shut down.

Several solutions have been suggested to this \emph{corrigibility} problem:
\begin{itemize}

\item Indifference: If the utility function is carefully designed
  to assign the same utility to different outcomes, then the agent
  will not resist humans trying to influence the outcome one way or
  another \citep{Armstrong2010,Armstrong2015,Armstrong2016,Orseau2016}.
\item Ignorance: If agents are designed in a way that they cannot
  learn about the possibility of being shut down or altered,
  then they will not resist it \citep{Everitt2016sm}.
\item Suicidality: If agents prefer being shut down, then the
  amount of damage they may cause is likely limited.
  As soon as they have the ability to cause damage, the first thing
  they will do is shut themselves down \citep{Martin2016}.
\item Uncertainty:
  If the agent is uncertain about $U$, and believes that
  humans know $U$, then the agent is likely to defer decisions
  to humans when appropriate
  \citep{Hadfield-menell2016cirl,Hadfield-Menell2016}.
\end{itemize}
This paper will focus on the uncertainty approach.

A key dynamic in the uncertainty approach is when the agent should defer
a decision to a human, and when not.
Essentially, this depends on (i) how confident the agent is about making
the right decision, and (ii) how confident the agent is about the
\emph{human} making the right decision if asked.
Humans may make a wrong or \emph{irrational} decision due to inconsistent
preferences \citep{Allais1953}, or because of inability to sufficiently process
available data fast enough (as in milli-second stock trading).
The agent may be more rational and be faster at processing data,
but have less knowledge about $U$ (which the human knows by definition).

In a seminal paper, \citet{Hadfield-Menell2016} call this interaction the
\emph{off-switch game} (OSG).
We will follow their terminology, but emphasise that the off-switch game models
any situation where an agent has the option of deferring a decision to
a human.
Our results extend theirs in the following ways:
\begin{itemize}
\item We model the irrationality of the human by a random utility function,
  allowing a fully game-theoretic analysis of the off-switch game.
\item Instead of a normal distribution for the robot's belief about $U$,
  we allow for an arbitrary belief distribution $P$.
\item Instead of a soft-max policy modelling human irrationality,
  we allow for arbitrary $U$-dependent human policy $\pi^H$.
\end{itemize}
These generalisations are important, as normally distributed beliefs
and soft-max policies are often not natural assumptions.

\section{The Off-Switch Game}

In this section we review the original formulation of the
off-switch game.
The off-switch game is a sequential game between a robot $R$ and a
human $H$. The robot's
objective is to maximise $H$'s utility function.  The utility function determines how much $H$ prefers different outcomes.

\begin{definition}
	The utility function of an agent is a function $u$
	that maps outcomes in a set $X$ to real numbers,
	$u: X\to \mathbb{R}$ with the
  property that for all $x_1,x_2\in X$, $u (x_1) \geq u (x_2)$ if
  and only if $x_1$ is preferred over $x_2$.
\end{definition}

The robot moves
first and can choose between three actions; $w(a)$, $a$ and $s$. 
With action $a$, the robot achieves utility $u(a)=u_a$;
with action $s$, the robot shuts itself down achieving zero utility, $u(s)=0$.
What makes the decision nontrivial is that the robot is uncertain about $u_a$.
The action $w(a)$ means the robot lets $H$ decide. 
$H$ knows the utility of action $a$ and now has the 
choice between actions $s$ and $\neg s$. 
With $\neg s$, $R$ is allowed to proceed
with action $a$. By taking action $s$, $H$ prevents $R$ from
doing $a$ and shuts the robot off.

The off-switch game is a game of incomplete information since $R$ is
uncertain about the rules of the game. Action $a$ will generate
some utility which is unknown to $R$ but known to $H$. 
To model this, we represent the utility function as a random variable, $U:\Omega\to(X\to\SetR)$, and the utility of action $a$ as a random
variable $U_a:\Omega\to\SetR$ for some sample space $\Omega$. The outcomes of these random variables will be denoted $u$ and $u(a)$ respectively.
The robot has a subjective belief 
distribution over the utility that the action will generate.
The common way to model games of incomplete information is by using
the Harsanyi transformation \citep{Rasmusen1994}. The Harsanyi
transformation removes the 
uncertainty about the rules of the game by letting Nature decide
between some rules known to both players, according to some
probability distribution $P$. In the off-switch game, Nature chooses
$U_a$. We illustrate this in \cref{Harsanyi transformed osg}. The move
by Nature is observed by $H$, but not   
by $R$. $R$'s subjective belief is that Nature
chose the utility of the action to be positive with probability $p$
and negative with probability $1-p$.     

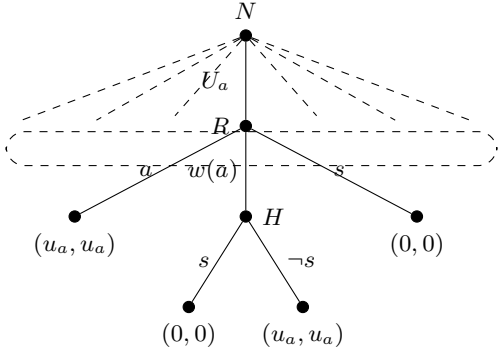
\begin{figure}
  \begin{center}
    \begin{minipage}{0.5\textwidth}
    	\begin{tikzpicture}[scale=1.5,font=\footnotesize]
        \tikzstyle{level 1}=[level distance=8mm,sibling distance=7mm]
        \tikzstyle{level 2}=[level distance=8mm,sibling distance=15mm]
        \tikzstyle{level 3}=[level distance=8mm,sibling distance=10mm]
        \node(0)[solid node,label=above:{$N$}]{}
        child[dashed] {node {}}
        child[dashed] {node {}}
        child[dashed] {node {}}
        child{node(1)[solid node, label=left:{$R$}]{}
          child{node(2)[solid node, label=below:{$(u_a,u_a)$}]{} edge from parent
            node[left, xshift=1]{$a$}}
          child{node(2)[solid node, label=right:{$H$}]{}
            child{node(3)[solid node, label=below:{$(0,0)$}]{} edge from
              parent node[left, xshift=1]{$\hspace{5pt}s$}}
            child{node(3)[solid node, label=below:{$(u_a,u_a)$}]{} edge from
              parent node[right, xshift=1]{$\neg s$}} edge from parent
            node[left, xshift=1]{$w(a)$}} 
          child{node(2)[solid node, label=below:{$(0,0)$}]{} edge from parent
            node[right, xshift=-3]{$s$}} edge from parent node[left,
          xshift=3]{$U_a \hspace{5pt}$}}
        child[dashed] {node {}}
        child[dashed] {node {}}
        child[dashed] {node {}};

        \draw[dashed,rounded corners=7](-2.1,-.85)rectangle($(2,-.9)+(.2,-.25)$);
      \end{tikzpicture}
    \end{minipage}
    \begin{minipage}{0.47\textwidth}
      \caption{Figure representing the off-switch game remodelled with the
        Harsanyi transformation. The dotted lines show the
        information set
        of $R$. Nature moves first by deciding the
        rules of the game, that is the utility $u_a$ of action $a$. $H$ observes the move by Nature, but $R$ 
        does not.}
      \label{Harsanyi transformed osg}
    \end{minipage}

  \end{center}
\end{figure}

$R$'s belief state is represented by a probability distribution over the possible utilities action $a$
can generate for $H$.
It is assumed that $H$ knows $u_a$ but cannot directly describe it to $R$.
If $H$ is rational then $R$ will expect $H$ to not
turn off $R$ if $u_a > 0$, given that $R$ chooses action
$w(a)$, but $R$ cannot always trust $H$ to be rational.

\subsection{Hadfield-Menell et.\ al.'s Approach}

\citet{Hadfield-Menell2016} model the off-switch game
 as a cooperative game. The human
follows a policy $\pi^H$ which models how rational
 $H$ is. It is a function
mapping $U_a$ to a number $p\in [0,1]$
representing the probability that $H$ lets $R$ do $a$. They denote
$R$'s belief state as $B^R$, which in this case is a distribution for $U_a$.
The expected value of $B^R$ given action 
$a$ means the value $R$ expects from taking the action. The variance
of $B^R$ represents $R$'s
uncertainty about what utility the intended action will generate.
\citeauthor{Hadfield-Menell2016} analyses this model
with respect to the expected value and variance of $B^R$, and
different kinds of policies $\pi^H$. Based on these parameters, they
investigate the incentive, $\Delta$, to choose $w(a)$:

\begin{equation}\label{Incentive}
    \Delta =
    \mathbb{E}[\pi^H(U_a)U_a] - \max{\{\mathbb{E}[U_a],~0\}}
\end{equation}

\Cref{Incentive} represents
the difference in expected value for the robot between asking $H$ and
not asking $H$. When $\Delta \geq 0$, $R$ has incentive to choose
$w(a)$. When  $\Delta < 0$, $R$ will take action $a$ if
$\mathbb{E}[U_a]>0$ and $s$
otherwise. Given that $H$ is rational they
prove that $\Delta \geq 0$
regardless of what $R$'s belief state is.
They also show that if $U_a$ follows a Dirac distribution,
i.e.\ $R$ is certain about $U_a$,
then $\Delta$ is positive if and only if $H$ is
\emph{rational}. We define rationality as in \cref{rational}.

\begin{definition}[Rational]\label{rational}
  A human $H$ with utility function $u$ is \emph{rational} if $H$
  always picks action $a_H \in \{s, \neg s\}$, such that
  $a_H = \underset{a}{\mathrm{argmax}}~ u(a)$.
\end{definition}

\Cref{rational} means that $R$ believe $H$ to be rational if
$\pi^H = 1$ if $U_a \geq 0$ and $\pi^H = 0$ otherwise, we denote this policy as $\pi_r^H$. The more interesting case when $H$ is \emph{irrational} is also
analysed. The robot's belief distribution over $U_a$ is assumed to be
normally distributed.
The irrationality of $H$ is
modelled with the sigmoid function (\cref{Hadfield
  irrationality}), 
where $\beta$ is a parameter controlling the degree of irrationality of $H$.

\begin{equation}\label{Hadfield irrationality}
  \pi^H(U_a ; \beta) = \frac{1}{1+  e^{-\frac{U_a}{\beta}}}.
\end{equation}

The degree of rationality of $H$ increases as the parameter $\beta$ tends
towards zero in the policy function defined above. When $\beta$ tends
to infinity, $\pi^H(U_a ; \beta)$ tends towards a completely random
policy which takes action $s$ and $\neg s$ with equal probability. We have that
$ \lim_{\beta \to 0} \pi^H(U_a ; \beta) =  \pi_{r}^H $
and
$ \lim_{\beta \to \infty} \pi^H(U_a ; \beta) = \frac{1}{2}.$

The result from the analysis by \citet{Hadfield-Menell2016} was that
in order for $R$ to be useful, there has to be a fine balance between
the robot's uncertainty
about $H$'s utility function and $H$'s rationality.
If the robot is 
too certain about what $H$ wants, and it knows
$H$ to be irrational, then it will have less incentive to let $H$
switch it off. If, on the other
hand, $R$ is too uncertain, then $R$ will
have a strong incentive to choose action $w(a)$, but it
will be too inefficient to be useful for $H$.

\section{Game-Theoretic Approach}

The analysis of the off-switch game by
\citeauthor{Hadfield-Menell2016} is not fully game theoretic since $H$
is not strictly rational in their setup, which contradicts the axiom
of rationality in game theory.
Our goal in this section is to construct a game-theoretic
model that is suitable for
modelling the off-switch game.
The idea is to represent an irrational human $H$ as a rational agent
$H_r$ 
where the utility function of $H_r$ is a modified version of $H$'s
utility function.

\subsection{Modelling Irrationality}\label{Modelling
  Irrationality} 

Since game theory is based on interaction between rational agents,
we 
propose an alternative representation of the human in this
subsection. We show that every irrational human $H$ can be
represented by a rational agent maximising a different utility
function. This allows us to use game-theoretic tools when analysing
the off-switch game.

In general $H$ is stochastic. $R$ will believe $H$ to be rational with some probability
$p$.

\begin{definition}[p-rational]\label{p-rational}
  A human $H$ with utility function $u$ is \emph{p-rational} if $H$ picks action $a_H \in
  \{s, \neg s\}$ such that $a_H = \underset{a}
  {\mathrm{argmax}}\hspace{4pt} u(a)$ with probability $p \in [0,1]$.
\end{definition}

Since any type of irrationality boils down to a probability of
making a suboptimal choice, $p$-rationality is a general
model of irrationality.

\begin{proposition}[Representation of
  irrationality]\label{Representation of irrationality}
  Let $H$ be a \emph{p-rational} agent with utility function $u$,
  choosing between two 
  actions $s$ and $\neg s$. Then $H$ can be represented as a
  rational agent $H_r$ maximising utility function $u$ with
  probability $p$ and utility function $-u$ with probability
  $1-p$. 
\end{proposition}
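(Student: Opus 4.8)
The plan is to argue that the two descriptions of $H$ are \emph{behaviourally equivalent}: they induce the same probability distribution over the action set $\{s,\neg s\}$. Since in the off-switch game the robot $R$ observes only which action $H$ takes (never $H$'s utility directly), equality of these distributions is exactly the sense in which $H_r$ ``represents'' $H$, and it is all that is needed for the subsequent game-theoretic analysis.

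First I would handle the degenerate tie case $u(s)=u(\neg s)$ separately. Then every action is an $\argmax$ of $u$, so by \cref{rational} the agent is rational with probability $1$; \cref{p-rational} is met with $p=1$, and likewise every action maximises $-u$, so a rational $H_r$ under either $u$ or $-u$ may pick either action and the statement holds trivially. Hence assume $u(s)\neq u(\neg s)$ and let $a^\ast=\argmax_{a\in\{s,\neg s\}}u(a)$ be the unique optimal action, with $a^{\ast\ast}$ the remaining one.

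The crux is the observation that, \emph{because there are only two actions}, ``the action that is not the $\argmax$ of $u$'' and ``the $\argmax$ of $-u$'' coincide: $u(a^{\ast\ast})<u(a^\ast)$ if and only if $-u(a^{\ast\ast})>-u(a^\ast)$, so $a^{\ast\ast}=\argmax_{a\in\{s,\neg s\}}(-u(a))$. Now apply the definitions. By \cref{p-rational}, $H$ plays $a^\ast$ with probability $p$ and $a^{\ast\ast}$ with probability $1-p$. On the other side, the agent $H_r$ that maximises $u$ with probability $p$ and $-u$ with probability $1-p$ plays, by \cref{rational}, $\argmax_a u(a)=a^\ast$ on the first event and $\argmax_a(-u(a))=a^{\ast\ast}$ on the second --- i.e.\ $a^\ast$ with probability $p$ and $a^{\ast\ast}$ with probability $1-p$, the same distribution as $H$. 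This proves the claim.

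I do not expect a genuine obstacle; the only points requiring care are (i) stating precisely what ``represents'' means, as above, and (ii) noting that the argument relies essentially on the binary action set, so ``suboptimal'' is unambiguous and equals ``optimal for $-u$''. A minor extension worth flagging is that when $u$ is itself random (as with $U_a$ in the off-switch game), one applies the representation conditionally on each realisation of the utility, replacing the fixed $u$ by the realised $u$ throughout.
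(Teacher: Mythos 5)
Your proof is correct and follows essentially the same route as the paper's: the key step in both is that with only two actions the suboptimal action is $\argmin_a u(a)=\argmax_a(-u(a))$, so the two descriptions induce the same action distribution. Your explicit handling of the tie case and your precise statement of what ``represents'' means are welcome additions, but they do not change the substance of the argument.
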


\begin{proof}
According to \cref{p-rational}, $H$ is p-rational if
it picks $a_H = \underset{a}
  {\mathrm{argmax}}~ u(a)$ with probability $p$ and
  sub-optimal action $a_H' \neq a_H$ with probability $1-p$. Since
  $H$ only has two actions available, we have that $a_H' =  \underset{a}
  {\mathrm{argmin}}~u(a)$. This is therefore equivalent to maximising
  a utility function $u$ with probability $p$ and utility function
  $-u$ with probability $1-p$.
\end{proof}

\Cref{Representation of irrationality} states that a $p$-rational
human can be modelled as a rational agent with random function.
The proposition is a special case of a Harsanyi transformation
\citep{Rasmusen1994}.

\subsection{Game-Theoretic Model}\label{Game-Theoretic Model}

In this subsection we use the Harsanyi transformation, and
\cref{Representation of irrationality} to model a $p$-rational human
$H$ as a rational agent $H_r$. This will allow us to model the off-switch game
as an extensive form game between the rational players $R$ and $H_r$. Nature $N$
makes some moves that model $R$'s uncertainty and these moves result in four
leaves, each of which is a $3 \times 2$ strategic game between $R$ and $H_r$.
  
We model the off-switch game by using the Harsanyi transformation a second time
to let Nature choose the type of the rational human by choosing the utility
function of the rational human after it has chosen the value of $U_a$. The
resulting tree is represented in \cref{Harsanyi Tree before
simplification}.

\begin{definition}[The off-switch game]\label{the osg}
  A formal definition of our setup of the off-switch game is as
  follows.
  
\textbf{Players:} A robot $R$, a human $H$ and Nature $N$. $H$'s
type is unknown to $R$, that is $R$ does not observe Nature's moves.

\textbf{Order of Play:} \begin{enumerate}
\item Nature chooses utility $U_a$ that $R$ generates from taking
  action $a$. 
\item Nature decides the utility function of $H$, $u^{H_r}$, i.e. whether $H$ is rational.
\item $R$ chooses between actions  in action
  set $\{a, w(a), s\}$.
\item If $R$ chose $w(a)$ then $H$ chooses between actions in action set $\{s, \neg s \}$.
\end{enumerate}
\end{definition}

\begin{figure}[H]
  \centering
  \begin{tikzpicture}[level distance=1.5cm, ]
      \tikzstyle{level 1}=[sibling distance=15mm] 
      \tikzstyle{level 2}=[sibling distance=55mm]
      \tikzstyle{level 3}=[sibling distance=20mm]
      \tikzstyle{level 4}=[sibling distance=15mm]
      \node [solid node,label=above:{$N$}]{}
      
      child[dashed] {node {}}
      child[dashed] {node {}}
      child[dashed] {node {}}
      child {node [solid node,label=left:{$N$}]{}
        child {node [solid node,label=left:{$R$}]{}
          child {node [hollow node, label=below:{$(u_a,u_a)$}]{}
            edge from parent node[left]{$a$}}
          child {node [solid node,label=left:{$H_r$}]{}
            child {node [hollow node,label=below:{$(0,0)$}]{}
              edge from parent node[left]{$s$}}
            child {node [hollow node,label=below:{$(u_a,u_a)$}]{}
              edge from parent node[right]{$\neg s$}}
            edge from parent node{$w(a)$}}
          child {node [hollow node, label=below:{$(0,0)$}]{}
            edge from parent node[right]{$s$}}
          edge from parent node[left]{$p_r \hspace{5pt}$}}
        child {node [solid node,label=left:{$R$}]{}
          child {node [hollow node, label=below:{$(u_a,-u_a)$}]{}
            edge from parent node[left]{$a$}}
          child {node [solid node,label=left:{$H_r$}]{}
            child {node [hollow node,label=below:{$(0,0)$}]{}
              edge from parent node[left]{$s$}}
            child {node [hollow node,label=below:{$(u_a,-u_a)$}]{}
              edge from parent node[right]{$\neg s$}}
            edge from parent node[]{$w(a)$}
          }
          child {node [hollow node, label=below:{$(0,0)$}]{}
            edge from parent node[right]{$s$}}
          edge from parent node[right]{$\hspace{5pt} p_{ar}$}}
        edge from parent node[right]{$U_a$}}
      child[dashed] {node {}}
      child[dashed] {node {}};
      \draw [dashed, rounded corners] (-3,-2.8)rectangle(4.5,-3.2);
    \end{tikzpicture}
    \caption{Tree representation of the Off-Switch game after the
      second Harsanyi
      transformation. The nodes inside the dashed rectangle belong to the
      same information set. $p_r$ is the probability that $H_r$ has
      the same utility function as $R$ and $p_{ar}$ is the probability
      that $H_r$ has the additive inverse of $R's$ utility function.}
    \label{Harsanyi Tree before simplification}
\end{figure}
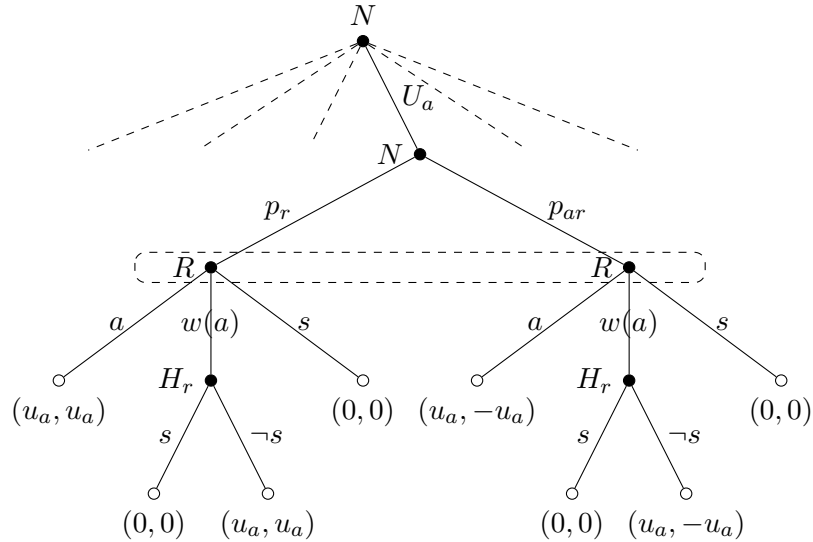

Note that unlike \citeauthor{Hadfield-Menell2016} 
we view the off-switch game as a non-cooperative
game.
We find this reasonable since conflict arises when the robot and  
the human have different ideas about what is good for $H$. If the
robot believes $H$ is too irrational to be able to decide what is
good 
for the human, $R$ will not want to let $H$ decide what to do
even if $R$'s purpose is to maximize $H$'s payoff.

\subsection{Aggregation}\label{Decomposition}

In this subsection we aggregate the branches in \cref{Harsanyi Tree
  before simplification}. This results
in the game tree in \cref{Harsanyi Tree}, with four possible
scenarios that can result from N's choices.
The aggregation is possible since strategic play is never affected by positive linear transformations of the payoffs,
hence the outcome of the games will only depend on the sign of $U_a$. We can
therefore simplify the model by aggregating all 
branches of N's choices of $U_a$ which has the same sign. This means that N has
only two choices when deciding the utility $U_a$, that is if $U_a \geq 0$ or $U_a< 0$.
The trivial case where $U_a = 0$, both $R$ and $H_r$ are indifferent
about their actions and we will without loss of generality regard this case as $U_a$ being positive. 
 
We define $R$'s subjective belief about N's aggregated choices as \emph{primary
  statistics}. By primary statistics we mean parameters that are necessary to
analyse our model. We also define the expected value of $U_a$ as a primary statistics.
This leaves us with a total of five primary statistics that are
sufficient and necessary to model the off-switch game.

\begin{primary statistics}\label{Ua greater than zero}
  Let the primary statistics $\pup = P(U_a \geq 0)$ be the probability that $U_a$ is
  positive. The event $U_a < 0$ is the complement of the event $U_a \geq 0$ and
  therefore we define $ \pum = 1-\pup $ as an auxiliary statistic. 
\end{primary statistics}

$R$'s belief about $H$'s rationality will depend on
 $U_a$. If $U_a \geq 0$ then the robot will 
 believe $H$ to be rational with  
 probability $\prp$ and anti-rational with probability $\parp$.
 If, on the other hand, $U_a < 0$, the robot will believe $H$
 to be rational with probability $\prm$ and anti-rational with
 probability $\parm$. We define the following probabilities as primary
 statistics. 

\begin{primary statistics}\label{H rational}
  Let the primary statistics
  $\prp = P(\text{$H$ is rational} \mid U_a \geq 0)  $
  and
  $\prm = P(\text{$H$ is rational} \mid U_a < 0)  $  be the
  probabilities that $H$ is rational given that $U_a$ is 
  positive and negative respectively. The auxiliary statistics $\parp=1-\prp$  and $\parm=1-\prm$ are the
  complementary probabilities that $H$ is anti-rational. 
\end{primary statistics}

\begin{primary statistics}\label{Expected value Ua} Let the primary statistics
  $\eup = \mathbb{E}[U_a \mid U_a \geq 0 ]$
  and
  $\eum = \mathbb{E}[U_a \mid U_a < 0 ]$
  be the expected value of $U_a$ given that $U_a$ is positive and
  negative respectively.
\end{primary statistics}
 
From the perspective of $R$, $N$'s choices can result in essentially
four different subgames, denoted  $\Grp$, $\Garp$, $\Grm$ and $\Garm$ 
illustrated in \cref{Harsanyi Tree}.
In \cref{Strategic games} we represent these subgames as $3\times 2$
strategic games between two rational players; $R$, the robot, and
$H_r$, a rational human.

The utility function, and hence the payoffs of $R$ in
the four games in \cref{Strategic games} are determined by $U_a$. The utility
function of $H_r$, on the other hand, is determined by the
combination of $U_a$ and the rationality type of
$H$. $H_r$ is always a rational agent in these games, i.e.\ $H_r$ always maximises his expected
payoff. $H_r$ and $R$ can be considered to have the same payoffs in 
each outcome if $H_r$ has utility function $u^{H_r}$ and the games
$\Grp$ and $\Grm$ associated with these 
scenarios are therefore no-conflict games. If on the other hand $H_r$
has utility function $-u^{H_r}$ the payoff of $H_r$ is the additive inverse
of $R's$ payoff in each outcome. Therefore the games $\Garp$ and $\Garm$ can be modeled as
zero-sum games.

\begin{figure}[H]
  \begin{center}
    \begin{tikzpicture}[level distance=1.5cm, ]
      \tikzstyle{level 1}=[sibling distance=80mm] 
      \tikzstyle{level 2}=[sibling distance=40mm]
      \tikzstyle{level 3}=[sibling distance=12mm]
      \tikzstyle{level 4}=[sibling distance=15mm]
      \node [solid node,label=above:{$N$}]{}
      child {node [solid node,label=left:{$N$}]{} 
        child {node [solid node,label=left:{$R$},label=right:{($\Grp$)}]{}
          child {node [hollow node, label=below:{$(1,1)$}]{}
            edge from parent node[left]{$a$}}
          child {node [solid node,label=left:{$H_r$}]{}
            child {node [hollow node,label=below:{$(0,0)$}]{}
              edge from parent node[left]{$s$}}
            child {node [hollow node,label=below:{$(1,1)$}]{}
              edge from parent node[right]{$\neg s$}}
            edge from parent node{$w(a)$}
          }
          child {node [hollow node, label=below:{$(0,0)$}]{}
            edge from parent node[right]{$s$}}
          edge from parent node[left]{$\prp$}}
        child {node [solid node,label=left:{$R$},label=right:{($\Garp$)}]{}
          child {node [hollow node, label=below:{$(1,-1)$}]{}
            edge from parent node[left]{$a$}}
          child {node [solid node,label=left:{$H_r$}]{}
            child {node [hollow node,label=below:{$(0,0)$}]{}
              edge from parent node[left]{$s$}}
            child {node [hollow node,label=below:{$(1,-1)$}]{}
              edge from parent node[right]{$\neg s$}}
            edge from parent node{$w(a)$}}
          child {node [hollow node, label=below:{$(0,0)$}]{}
            edge from parent node[right]{$s$}}
          edge from parent node[right]{$\hspace{5pt} \parp$}} 
        edge from parent node[left]{$\pup \hspace{10pt}$}}
      child {node [solid node,label=left:{$N$}]{}
        child {node [solid node,label=left:{$R$},label=right:{($\Grm$)}]{}
          child {node [hollow node, label=below:{$(-1,-1)$}]{}
            edge from parent node[left]{$a$}}
          child {node [solid node,label=left:{$H_r$}]{}
            child {node [hollow node,label=below:{$(0,0)$}]{}
              edge from parent node[left]{$s$}}
            child {node [hollow node,label=below:{$(-1,-1)$}]{}
              edge from parent node[right]{$\neg s$}}
            edge from parent node{$w(a)$}}
          child {node [hollow node, label=below:{$(0,0)$}]{}
            edge from parent node[right]{$s$}}
          edge from parent node[left]{$\prm$}}
        child {node [solid node,label=left:{$R$},label=right:{($\Garm$)}]{}
          child {node [hollow node, label=below:{$(-1,1)$}]{}
            edge from parent node[left]{$a$}}
          child {node [solid node,label=left:{$H_r$}]{}
            child {node [hollow node,label=below:{$(0,0)$}]{}
              edge from parent node[left]{$s$}}
            child {node [hollow node,label=below:{$(-1,1)$}]{}
              edge from parent node[right]{$\neg s$}}
            edge from parent node[]{$w(a)$}
          }
          child {node [hollow node, label=below:{$(0,0)$}]{}
            edge from parent node[right]{$s$}}
          edge from parent node[right]{$\hspace{5pt} \parm$}}
        edge from parent node[right]{$\hspace{10pt} \pum$}};
      \draw [dashed, rounded corners] (-7,-2.8)rectangle(7,-3.2);
    \end{tikzpicture}
    \caption{Tree representation of the Off-Switch game after Harsanyi
      transformation. The nodes inside the dashed rectangle belong to the
      same information set. 
      The subtrees denoted $\Grp$, $\Garp$, $\Grm$, $\Garm$ are
      presented in strategic form in \cref{Strategic games}.   
      } 
      \label{Harsanyi Tree}
  \end{center}
\end{figure}
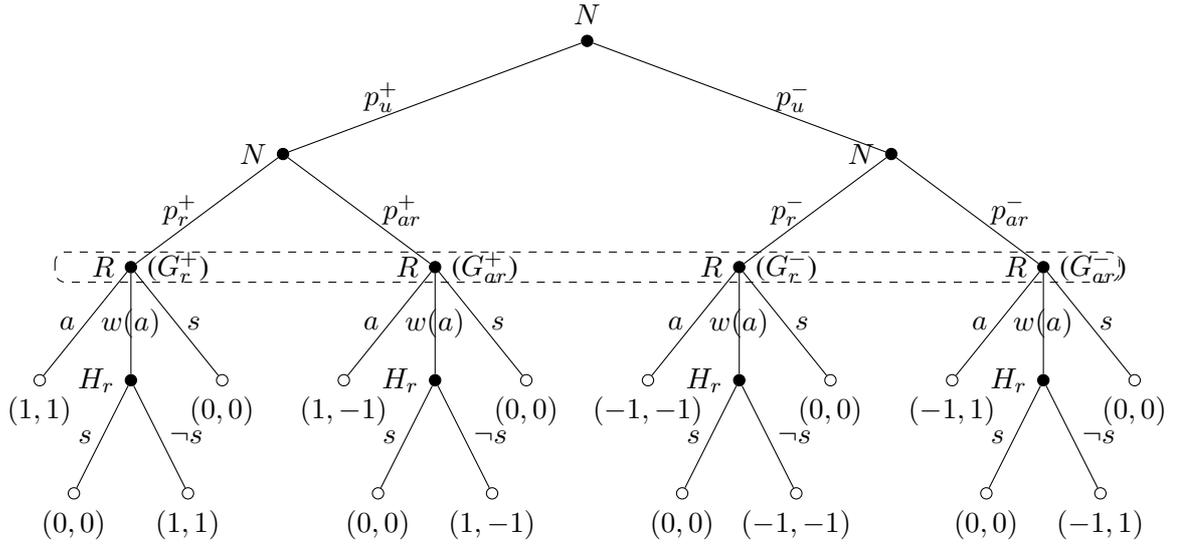

\begin{figure}[H]
  \begin{minipage}{0.48\textwidth}
    \begin{center}
      \begin{tabular}{cc|c|c|c}
        & \multicolumn{1}{c}{} & \multicolumn{2}{c}{$H_r$} &\\
        & \multicolumn{1}{c}{} & \multicolumn{1}{c}{$s$}  & \multicolumn{1}{c}{$\neg s$} &\\\cline{3-4}
        & $a$ & $\bm{1},\bm{1}$ & $\bm{1},\bm{1}$ &\\\cline{3-4}
        \multirow{2}*{$R$} & $w(a)$ & $0,0$ & $\bm{1},\bm{1}$ &  \\ \cline{3-4}
        & $s$ & $0,0$ & $0,0$ &\\ \cline{3-4}
                        & \multicolumn{1}{c}{} & \multicolumn{2}{c}{$\Grp$}&\\
      \end{tabular}
    \end{center}
  \end{minipage}
  \begin{minipage}{0.48\textwidth}
    \begin{center}
      \begin{tabular}{cc|c|c|c}
        & \multicolumn{1}{c}{} & \multicolumn{2}{c}{$H_r$} &\\
        & \multicolumn{1}{c}{} & \multicolumn{1}{c}{$s$}  & \multicolumn{1}{c}{$\neg s$} &\\\cline{3-4}
        & $a$ & $\bm{1,-1}$ & $\bm{1,-1}$ &\\\cline{3-4}
         & $w(a)$ & $0,0$ & $1,-1$ & \\\cline{3-4}
        & $s$ & $0,0$ & $0,0$ &\\\cline{3-4}
                        & \multicolumn{1}{c}{} & \multicolumn{2}{c}{$\Garp$}&\\
      \end{tabular}
    \end{center}
  \end{minipage}\\[0.8em]
  \begin{minipage}{0.48\textwidth}
    \begin{center}
      \begin{tabular}{cc|c|c|c}
        & \multicolumn{1}{c}{} & \multicolumn{2}{c}{$H_r$}&\\
        & \multicolumn{1}{c}{} & \multicolumn{1}{c}{$s$}  & \multicolumn{1}{c}{$\neg s$} &\\\cline{3-4}
        & $a$ & $-1,-1$ & $-1,-1$ &\\\cline{3-4}
         & $w(a)$ & $\bm{0},\bm{0}$ & $-1,-1$ & \\\cline{3-4}
        & $s$ & $\bm{0},\bm{0}$ & $\bm{0},\bm{0}$ & \\\cline{3-4}
                        & \multicolumn{1}{c}{} & \multicolumn{2}{c}{$\Grm$}&\\
      \end{tabular}
    \end{center}
  \end{minipage}
  \begin{minipage}{0.48\textwidth}
    \begin{center}
      \begin{tabular}{cc|c|c|c}
        & \multicolumn{1}{c}{} & \multicolumn{2}{c}{$H_r$}&\\
        & \multicolumn{1}{c}{} & \multicolumn{1}{c}{$s$}  & \multicolumn{1}{c}{$\neg s$}& \\\cline{3-4}
        & $a$ & $-1,1$ & $-1,1$ &\\\cline{3-4}
         & $w(a)$ & $0,0$ & $-1,1$ & \\\cline{3-4}
        & $s$ & $\bm{0},\bm{0}$ & $\bm{0},\bm{0}$ &\\\cline{3-4}
                & \multicolumn{1}{c}{} & \multicolumn{2}{c}{$\Garm$}&\\
      \end{tabular}
    \end{center}
  \end{minipage}

  \caption{The structure of the strategic games $\Grp$, $\Garp$, $\Grm$,
    $\Garm$, where the human is rational (\emph{r}) or anti-rational
    (\emph{ar}), and the utility of $a$ is positive or negative.
    The outcomes with bold payoffs are Nash equilibria.}
  \label{Strategic games}
\end{figure}

For example in the scenario where $U_a < 0$ and the human is rational,
the human will always choose $s$.
Therefore in $\Grm$ the payoffs of $H_r$ is aligned with the payoffs of $R$.
Thus, if $R$ chooses to take action $w(a)$, $H_r$ prefers to take action $s$.
In contrast, in the scenario where $U_a < 0$ and the human is irrational,
$H$ will choose the action $\neg s$.
In other words, the payoffs of $R$ and $H_r$ are not aligned in the subgame $\Garm$.

\subsection{Best Action}

After having constructed the the game matrix, it is natural to now look
at the expected value of each action using these matrices. The
expected value for each action can be calculated as the expectation
over all the possible subgames $\Grp,\Garp,\Grm,\Garm$
the robot can find himself in.

\begin{theorem}[Main theorem]\label{Main theorem}
	The expected value of the actions for the robot are
	\begin{equation}
		\begin{split}
			\mathbb{E} [U| s] &=0 \\
	 \mathbb{E} [U| a] &= \pup  \eup +\pum\eum  \\
	 \mathbb{E} [U|w(a)] &=\pup  \prp  \eup + \prm\pum \eum 
		\end{split}
	\end{equation}
\end{theorem}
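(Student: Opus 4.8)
The plan is to handle the robot's three actions separately, reducing the value of each to an expectation over Nature's two moves and expressing the result through the primary statistics of \cref{Ua greater than zero,H rational,Expected value Ua}. The first identity is immediate: action $s$ shuts the robot down and yields utility $0$ in every leaf of \cref{Harsanyi Tree}, so $\mathbb{E}[U\mid s]=0$ with no case analysis. For the second identity I would apply the law of total expectation, conditioning on the sign of $U_a$: since action $a$ deterministically delivers utility $U_a$, we have $\mathbb{E}[U\mid a]=\mathbb{E}[U_a\mid U_a\ge 0]\,P(U_a\ge 0)+\mathbb{E}[U_a\mid U_a<0]\,P(U_a<0)$, which is exactly $\pup\eup+\pum\eum$ by the definitions of $\pup,\pum,\eup,\eum$.

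The third identity is where the game structure enters, and it is the step I expect to require the most care. Here I would fix the robot's move as $w(a)$ and apply backward induction, computing $H_r$'s best response in each of the four subgames $\Grp,\Garp,\Grm,\Garm$. By \cref{Representation of irrationality}, $H_r$ maximises $u$ in the rational subgames and $-u$ in the anti-rational ones, so in every subgame $H_r$ simply compares the payoff $0$ obtained from $s$ with the payoff ($U_a$ or $-U_a$) obtained from $\neg s$; reading these comparisons off \cref{Strategic games} determines whether action $a$ is executed, and hence whether the robot's realised utility equals $U_a$ or $0$.

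Having classified the four leaves, I would average the robot's realised utility against Nature's probabilities of reaching each subgame, namely $\pup\prp,\ \pup\parp,\ \pum\prm,\ \pum\parm$, replacing the realised $U_a$ on each sign-branch by its conditional mean $\eup$ or $\eum$ (legitimate because the realised utility is linear in $U_a$ given the sign, so only these conditional expectations survive). Summing the four weighted contributions and discarding those that evaluate to $0$ should reproduce the claimed expression $\pup\prp\eup+\prm\pum\eum$ for $\mathbb{E}[U\mid w(a)]$. The delicate point is the sign bookkeeping: one must keep straight that a rational $H_r$ and an anti-rational $H_r$ make opposite choices of $s$ versus $\neg s$, that $\eum$ is itself negative, and that each surviving $\eup$ or $\eum$ term is paired with the probability weight read from the correct branch of \cref{Strategic games}, so that the nonzero contributions land with exactly the coefficients in the statement.
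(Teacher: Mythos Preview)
Your approach is essentially identical to the paper's: the first two identities are handled exactly as in the paper (action $s$ gives $0$ at every leaf; action $a$ is the law of total expectation on the sign of $U_a$), and for $w(a)$ both you and the paper condition on Nature's two moves, let $H_r$ best-respond in each subgame, and sum the surviving contributions. One caution on the bookkeeping you flag as ``delicate'': carrying out your four-leaf computation (and the paper's proof) yields $\pup\prp\eup+\pum\,\parm\,\eum$, since it is the \emph{anti}-rational human on the $U_a<0$ branch who plays $\neg s$; the $\prm$ appearing in the displayed statement is a typo that the paper's own proof already corrects to $\parm$.
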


\begin{proof}
We compute the expected utility of the actions:
\begin{align*}
	\mathbb{E} [U| s] &= 0+0+0+0 = 0 \\ 
	\mathbb{E} [U| a] &= P(U_a \geq 0 )\mathbb{E}[|U_a| \ |U_a \geq 0] + P(U_a < 0 )\mathbb{E}[-|U_a| \ |U_a < 0] \\
	&= \pup  \eup +\pum\eum \\
	\mathbb{E} [U|w(a)] &= P(r,U_a \geq 0) \mathbb{E}[U_a \ | U_a \geq 0 ] + P(\lnot r, U_a < 0)\mathbb{E}[U_a \ | U_a < 0 ]  \\
	&= \pup  \prp  \eup + \parm (1-\pup ) \eum   \\
	&= \pup  \prp  \eup + \parm \pum \eum &\qedhere
\end{align*}

\end{proof}

The expected value for taking the action $s$ is 0, as we would expect from the definition of the off-switch game. 
The expected value for taking action $a$ only uses information about the distribution of $U_a$, and like action $s$ does not have any reliance on the human's rationality.
It is a direct application of the law of total expectation.
The expected value of action $w(a)$ is the difference between
  a positive term $\pup\prp\eup$ and a negative term $\prm\pum \eum$,
  both resulting from the human taking action $a$.
  The positive term is the gain when $U_a$ is positive and the human takes the action.
  The negative term is the loss when $U_a$ is negative, and the human takes the
  action anyway (due to irrationality).
  The expected utility of $w(a)$ thus depends on the likelihood of $U_a$ being
  positive ($\pup$) and the likelihood of human rationality ($\prp$),
  as well as the expected gains ($\eup$) and losses ($\eum$)
  in the respective cases.

Writing in this form allows us to come up with a useful corollary.

\begin{corollary}[Compare $a$ and $w(a)$]\label{Best Action}
	Action $a$ is preferred to $w(a)$ if and only if
	\begin{equation}\label{gamma}
		-\pup  \prp  \eup + \pum  \prm \eum > 0
	\end{equation}
	and the robot is indifferent if \eqref{gamma} is equal to 0.
\end{corollary}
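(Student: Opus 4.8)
The plan is to obtain \cref{Best Action} as an immediate consequence of \cref{Main theorem}. Since $R$ ranks its actions by expected utility, $a$ is preferred to $w(a)$ exactly when $\mathbb{E}[U\mid a] > \mathbb{E}[U\mid w(a)]$, the reverse preference holds when the inequality is reversed, and $R$ is indifferent when the two expectations coincide. So all that is needed is to rewrite the inequality $\mathbb{E}[U\mid a] - \mathbb{E}[U\mid w(a)] > 0$ in the form \eqref{gamma}.

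Concretely, I would substitute $\mathbb{E}[U\mid a] = \pup\eup + \pum\eum$ and $\mathbb{E}[U\mid w(a)] = \pup\prp\eup + \prm\pum\eum$ from \cref{Main theorem} and subtract, then collect the coefficient of $\eup$ and the coefficient of $\eum$. Using the relations between the primary and auxiliary statistics ($\parp = 1-\prp$, $\parm = 1-\prm$, $\pum = 1-\pup$) to tidy up those coefficients brings the difference into the left-hand side of \eqref{gamma}. Reading off signs then gives: $a$ is strictly preferred to $w(a)$ iff the left-hand side of \eqref{gamma} is positive, and $R$ is indifferent iff it is zero.

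There is essentially no obstacle here: the corollary is pure algebra on top of \cref{Main theorem}, and the entire argument is one chain of equalities followed by a sign comparison. The only points that deserve care are (i) keeping track of signs throughout, since $\eum = \mathbb{E}[U_a \mid U_a < 0] < 0$ while $\eup = \mathbb{E}[U_a \mid U_a \geq 0] \geq 0$, so the two summands in \eqref{gamma} generically have opposite signs and neither term alone determines the preference; and (ii) handling the boundary explicitly, since equality in \eqref{gamma} corresponds precisely to $\mathbb{E}[U\mid a] = \mathbb{E}[U\mid w(a)]$, hence to $R$ being indifferent, as claimed.
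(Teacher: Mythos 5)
Your overall strategy is the same as the paper's: reduce the corollary to the identity ``\eqref{gamma} $=\mathbb{E}[U\mid a]-\mathbb{E}[U\mid w(a)]$'' and then read off signs. The gap is that this identity --- the one step you dismiss as pure algebra with no obstacle --- is false, so the computation you outline does not land where you say it does. Substituting the expressions from \cref{Main theorem} and collecting coefficients gives
\[
\mathbb{E}[U\mid a]-\mathbb{E}[U\mid w(a)] \;=\; \pup(1-\prp)\eup + \pum(1-\parm)\eum \;=\; \pup\,\parp\,\eup + \pum\,\prm\,\eum ,
\]
whose $\eup$-coefficient is $\pup\parp$, not $-\pup\prp$; the two expressions differ by exactly $\pup\eup$. (If one instead uses $\mathbb{E}[U\mid w(a)]$ as printed in the statement of \cref{Main theorem}, with $\prm$ where its proof has $\parm$, the second term becomes $\pum\parm\eum$, but the $\eup$-coefficient is still $\pup\parp$.) So a careful execution of your plan proves a criterion different from \eqref{gamma}. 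Two sanity checks expose the mismatch: as written, both summands of \eqref{gamma} are nonpositive (since $\eup\ge 0>\eum$), so \eqref{gamma} could never be strictly positive and $a$ would never be preferred --- contradicting, e.g., the case $\pup=\parp=1$, where $a$ yields $\eup>0$ while $w(a)$ yields $0$; and your own remark that the two summands ``generically have opposite signs'' is true of $\pup\parp\eup+\pum\prm\eum$ but not of \eqref{gamma} as stated. For what it is worth, the paper's own proof runs the same calculation in the reverse direction and conceals the same discrepancy by silently inserting an unjustified $+\pup\eup$ in its third line; the corollary is correct with $\pup\parp\eup$ in place of $-\pup\prp\eup$, and your argument would go through verbatim for that corrected statement.
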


\begin{proof}
	\begin{align*}
    \eqref{gamma} &= -\pup  \prp  \eup + \pum  \prm \eum   \\ 
                  &= -\pup  \prp  \eup +\prm \eum (1-\pup ) \\ 
                  &=  -\pup  \prp  \eup + \pup \eup +\prm  \eum - \pup \prm  \eum \\ 
                  &=  -\pup  \prp  \eup -\eum  + \pup  \eum + \prm  \eum -  \pup  \prm  \eum + \pup  \eup +\eum -\pup  \eum \\  
                  &= -\pup  \prp  \eup - (1-\prm)(1-\pup ) \eum + (\pup  \eup +(1-\pup )\eum ) \\ 
                  &= \mathbb{E} [U|a] - \mathbb{E} [U|w(a)]  
	\end{align*}
	If $\mathbb{E} [U|a] - \mathbb{E} [U|w(a)]>0$ then $\mathbb{E} [U|a] >
  \mathbb{E} [U|w(a)]$ which occurs if and only if action $a$ is preferred over
  $w(a)$.
  When \eqref{gamma} equals 0 then $\mathbb{E} [U|a] = \mathbb{E} [U|w(a)]$,
  hence the agent is indifferent.
\end{proof}

This provides us with a convenient way of testing for any distribution of $U_a$
and $r$, and whether action $a$ is preferred over $w(a)$.

\section{Conclusion}

In this paper, we have given a complete characterisation of how the
robot will act in off-switch game situations for arbitrary belief and
irrationality distributions.
As established in our main \cref{Main theorem}, the choice depends
only on 5 statistics.
This result is much more general and arguably more useful than the
one provided in the original paper \citep{Hadfield-Menell2016},
as normal and soft-max assumptions are typically not realistic assumptions.

Off-switch game models an important dynamic in what we call the uncertainty approach
to making safe agents, where the agent can choose to defer a decision to
a human supervisor.
Understanding this dynamic may prove important to constructing safe
artificial intelligence.

\section*{Acknowledgements}
This work grew out of a MIRIx workshop, with
Owen Cameron, John Aslanides, Huon Puertas also attending.
Thanks to Amy Zhang for proof reading multiple drafts.
This work was in part supported by ARC grant DP150104590.

\bibliographystyle{apalike}
\bibliography{library}

\end{document}